\providecommand{\algorithmname}{Algorithm}
\newtheorem{proposition}{Proposition}
\newtheorem{corollary}{Corollary}
\newtheorem{lemma}{Lemma}
\newcommand\undermat[2]{%
  \makebox[0pt][l]{$\smash{\underbrace{\phantom{%
    \begin{matrix}#2\end{matrix}}}_{\text{$#1$}}}$}#2}
\begin{document}

\title{Mobility-Induced Service Migration in Mobile Micro-Clouds}

\author{\IEEEauthorblockN{Shiqiang Wang\IEEEauthorrefmark{1}, Rahul
Urgaonkar\IEEEauthorrefmark{2}, Ting He\IEEEauthorrefmark{2},
Murtaza Zafer\IEEEauthorrefmark{3}\IEEEauthorrefmark{5},
Kevin Chan\IEEEauthorrefmark{4}, and Kin K. Leung\IEEEauthorrefmark{1}}\IEEEauthorblockA{\IEEEauthorrefmark{1}Department
of Electrical and Electronic Engineering, Imperial College London,
United Kingdom} \IEEEauthorblockA{\IEEEauthorrefmark{2}IBM T.
J. Watson Research Center, Yorktown Heights, NY, United States} \IEEEauthorblockA{\IEEEauthorrefmark{3}Samsung
Research America, San Jose, CA, United States} \IEEEauthorblockA{\IEEEauthorrefmark{4}Army
Research Laboratory, Adelphi, MD, United States} \IEEEauthorblockA{Email:
\IEEEauthorrefmark{1}\{shiqiang.wang11, kin.leung\}@imperial.ac.uk,
\IEEEauthorrefmark{2}\{rurgaon, the\}@us.ibm.com,\\ \IEEEauthorrefmark{3}murtaza.zafer.us@ieee.org,
\IEEEauthorrefmark{4}kevin.s.chan.civ@mail.mil}

\thanks{\IEEEauthorrefmark{5} Contributions of the author to this work are not related to his current employment at Samsung Research America.

\copyright  2014 IEEE. Personal use of this material is permitted. Permission from IEEE must be obtained for all other uses, in any current or future media, including reprinting/republishing this material for advertising or promotional purposes, creating new collective works, for resale or redistribution to servers or lists, or reuse of any copyrighted component of this work in other works.}


}

\maketitle
\begin{abstract}
Mobile micro-cloud is an emerging technology in distributed computing,
which is aimed at providing seamless computing/data access to the
edge of the network when a centralized service may suffer from poor
connectivity and long latency. Different from the traditional cloud,
a mobile micro-cloud is smaller and deployed closer to users, typically
attached to a cellular basestation or wireless network access point.
Due to the relatively small coverage area of each basestation or access
point, when a user moves across areas covered by different basestations
or access points which are attached to different micro-clouds, issues
of service performance and service migration become important. In
this paper, we consider such migration issues. We model the general
problem as a Markov decision process (MDP), and show that, in the
special case where the mobile user follows a one-dimensional asymmetric
random walk mobility model, the optimal policy for service migration
is a threshold policy. We obtain the analytical solution for the cost
resulting from arbitrary thresholds, and then propose an algorithm
for finding the optimal thresholds. The proposed algorithm is more
efficient than standard mechanisms for solving MDPs.\end{abstract}

\begin{IEEEkeywords}
Cloud computing, Markov decision process (MDP), mobile micro-cloud,
mobility, service migration, wireless networks
\end{IEEEkeywords}


\section{Introduction}

Cloud technologies have been developing successfully in the past decade,
which enable the centralization of computing and data resources so
that they can be accessed in an on-demand basis by different end users.
Traditionally, clouds are centralized, in the sense that services
are provided by large data-centers that may be located far away from
the user. A user may suffer from poor connectivity and long latency
when it connects to such a centralized service. In recent years, efforts
have been made to distribute the cloud closer to users, to provide
faster access and higher reliability to end users in a particular
geographical area. A notable concept in this regard is the mobile
micro-cloud, where a small cloud consisting of a small set of servers
is attached directly to the wireless communication infrastructure
(e.g., a cellular basestation or wireless access point) to provide
service to users within its coverage. Applications of the mobile micro-cloud
include data and computation offloading for mobile devices \cite{abe2013vtube,ha2013WearableCognitiveAssistance},
which is a complement for the relatively low computational and data
storage capacity of mobile users. It is also beneficial for scenarios
requiring high robustness or high data-processing capability closer
to the user, such as in hostile environments \cite{CloudletHostile}
or for vehicular networks \cite{wang2013VNET}. There are a few other
concepts which are similar to that of the mobile micro-cloud, including
edge computing \cite{Edge-as-a-service}, Cloudlet \cite{CloudletHostile},
and Follow Me Cloud \cite{FollowMeMagazine}. We use the term mobile
micro-cloud throughout this paper.

\begin{figure}
\center{\includegraphics[width=0.8\linewidth]{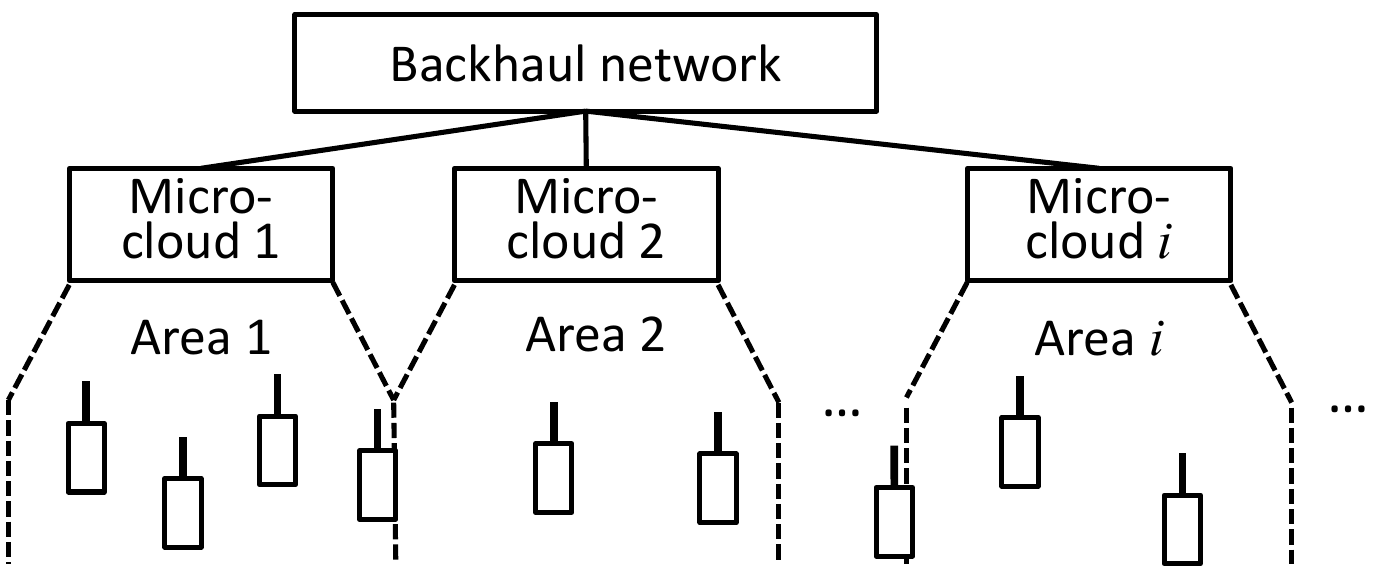}}

\protect\caption{Application scenario with mobile micro-cloud.}
\label{fig:scenario}
\end{figure}

A significant issue in the mobile micro-cloud is service migration
caused by the mobility of users. Because different micro-clouds are
attached to different basestations or access points, a decision needs
to be made on whether and where to migrate the service, when a user
moves outside the service area of a micro-cloud that is providing
its service. Consider the scenario as shown in Fig. \ref{fig:scenario},
which resembles the case where a micro-cloud is connected to a basestation
that covers a particular area, and these micro-clouds are also interconnected
with each other via a backhaul network. When a mobile user moves from
one area to another area, we can either continue to run the service
on the micro-cloud for the previous area, and transmit data to/from
the user via the backhaul network, or we can migrate the service to
the micro-cloud responsible for the new area. In both cases, a cost
is incurred; there is a data transmission cost for the first case,
and a migration cost for the second case.

In the literature, only a few papers have studied the impact of mobility
and its relationship to service migration for mobile micro-clouds.
In \cite{FollowMeGC2013}, analytical results on various performance
factors of the mobile micro-cloud are studied, by assuming a symmetric
2-dimensional (2-D) random walk mobility model. A service migration
procedure based on Markov decision process (MDP) for 1-D random walk
is studied in \cite{MDPFollowMeICC2014}. It mainly focuses on formulating
the problem with MDP, which can then be solved with standard techniques
for solving MDPs. 

In this paper, similarly to \cite{MDPFollowMeICC2014}, we consider
an MDP formulation of the migration problem. In contrast to \cite{MDPFollowMeICC2014},
we propose an optimal threshold policy to solve for the optimal action
of the MDP, which is more efficient than standard solution techniques.
A threshold policy means that we always migrate the service for a
user from one micro-cloud to another when the user is in states bounded
by a particular set of thresholds, and not migrate otherwise. We first
prove the existence of an optimal threshold policy and then propose
an algorithm with polynomial time-complexity for finding the optimal
thresholds. The analysis in this paper can also help us gain new insights
into the migration problem, which set the foundation for more complicated
scenarios for further study in the future. 

The remainder of this paper is organized as follows. In Section II,
we describe the problem formulation. Section III shows that an optimal
threshold policy exists and proposes an algorithm to obtain the optimal
thresholds. Simulation results are shown in Section IV. Section V
draws conclusions.

\section{Problem Formulation}

We consider a 1-D region partitioned into a discrete set of areas,
each of which is served by a micro-cloud, as shown in Fig. \ref{fig:scenario}.
Such a scenario models user mobility on roads, for instance. A time-slotted
system (Fig. \ref{fig:timing}) is considered, which can be viewed
as a sampled version of a continuous-time system, where the sampling
can either be equally spaced over time or occur right after a handoff
instance. 

Mobile users are assumed to follow a 1-D asymmetric random walk mobility
model. In every new timeslot, a node moves with probability $p$ (or
$q$) to the area that is on the right (or left) of its previous area,
it stays in the same area with probability $1-p-q$. If the system
is sampled at handoff instances, then $1-p-q=0$, but we consider
the general case with $0\leq1-p-q\leq1$. Obviously, this mobility
model can be described as a Markov chain. We only focus on a single
mobile user in our analysis; equivalently, we assume that there is
no correlation in the service or mobility among different users. 

The state of the user is defined as the \emph{offset} between the
mobile user location and the location of the micro-cloud running the
service at the beginning of a slot, before possible service migration,
i.e., the state in slot $t$ is defined as $s_{t}=u_{t}-h_{\ensuremath{t}}$,
where $u_{t}$ is the location (index of area) of the mobile user,
and $h_{t}$ the location of the micro-cloud hosting the service.
Note that $s_{t}$ can be zero, positive, or negative. At the beginning
of each timeslot, the current state is observed, and the decision
on whether to migrate the service is made. If migration is necessary,
it happens right after the state observation, i.e., at the beginning
of the timeslot. We assume that the time taken for migration is negligible
compared with the length of a slot. 

\begin{figure}
\center{\includegraphics[width=0.95\linewidth]{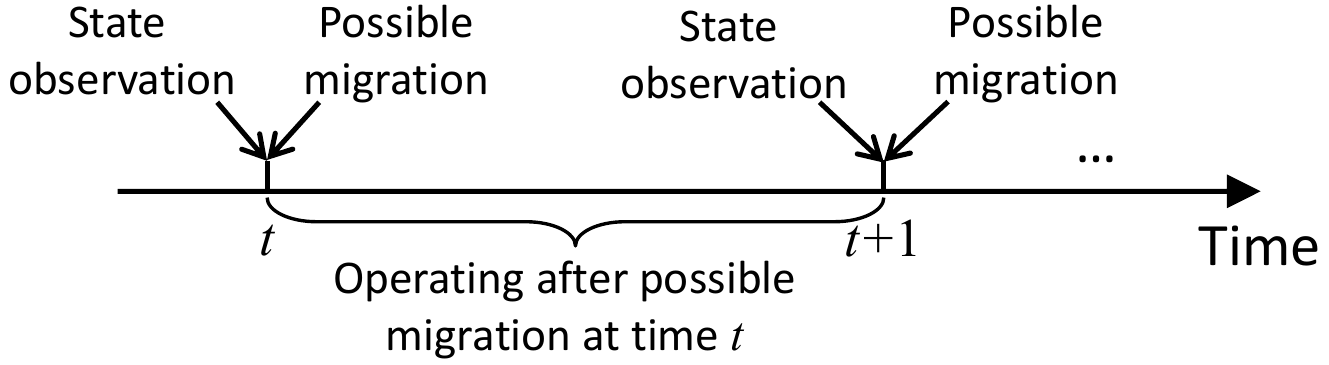}}

\protect\caption{Timing of the proposed mechanism.}
\label{fig:timing}
\end{figure}

We study whether and where to migrate the service when the mobile
user has moved from one area to another. The cost in a single timeslot
$C_{a}(s)$ is defined as the cost under state $s$ when performing
action $a$, where $a$ represents a migration decision for the service
such as no migration or migration to a specified micro-cloud. The
goal is to minimize the discounted sum cost over time. Specifically,
under the current state $s_{0}$, we would like to find a policy $\pi$
that maps each possible state $s$ to an action $a=\pi(s)$ such that
the expected long-term discounted sum cost is minimized, i.e., 
\begin{equation}
V(s_{0})=\min_{\pi}\mathrm{E}\left[\sum_{t=0}^{\infty}\gamma^{t}C_{\pi(s_{t})}\left(s_{t}\right)\Bigg|s_{0}\right]\label{eq:objFunc}
\end{equation}
where $\mathrm{E}\left[\cdot|\cdot\right]$ denotes the conditional
expectation, and $0<\gamma<1$ is a discount factor.

Because we consider a scenario where all micro-clouds are connected
via the backhaul (as shown in Fig. \ref{fig:scenario}), and the backhaul
is regarded as a central entity (which resembles the case for cellular
networks, for example), we consider the following one-timeslot cost
function for taking action $a$ in state $s$ in this paper:
\begin{equation}
C_{a}(s)=\begin{cases}
0, & \textrm{if no migration or data transmission}\\
\beta, & \textrm{if only data transmission}\\
1, & \textrm{if only migration}\\
\beta+1, & \textrm{if both migration and data transmission}
\end{cases}\label{eq:oneTimeslotCostWords}
\end{equation}
Equation (\ref{eq:oneTimeslotCostWords}) is explained as follows.
If the action $a$ under state $s$ causes no migration or data transmission
(e.g., if the node and the micro-cloud hosting the service are in
the same location, i.e., $s=0$, and we do not migrate the service
to another location), we do not need to communicate via the backhaul
network, and the cost is zero. A non-zero cost is incurred when the
node and the micro-cloud hosting the service are in different locations.
In this case, if we do not migrate to the current node location at
the beginning of the timeslot, the data between the micro-cloud and
mobile user need to be transmitted via the backhaul network. This
data transmission incurs a cost of $\beta$. When we perform migration,
we need resources to support migration. The migration cost is assumed
to be $1$, i.e., the cost $C_{a}(s)$ is normalized by the migration
cost. Finally, if both migration and data transmission occur, in which
case we migrate to a location that is different from the current node
location, the total cost is $\beta+1$.

\begin{lemma}\label{lemma:NotToOtherNodes}Migrating to locations
other than the current location of the mobile user is not optimal.\end{lemma}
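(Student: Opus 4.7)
The plan is to show that at every state $s$, the action ``migrate so that the post-migration offset is $\delta$'' with $\delta\ne 0$ and $\delta\ne s$ is Bellman-dominated by at least one of the two alternatives ``no migration'' and ``migrate to the user's current location'' (post-migration offset $0$). This will imply that the Bellman operator has an optimum supported on the restricted action set $\{\text{no mig},\text{mig to user}\}$, which is what the lemma asserts.

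First I would write out the one-step Bellman values of the candidate and two alternative actions. With $X\in\{-1,0,+1\}$ denoting the next random-walk step, of probabilities $q$, $1-p-q$, $p$ respectively,
\begin{align*}
V_{\text{mig to }\delta}(s) &= 1+\beta+\gamma\,\mathrm{E}[V^*(\delta+X)],\\
V_{\text{mig to }0}(s) &= 1+\gamma\,\mathrm{E}[V^*(X)],\\
V_{\text{no mig}}(s) &= \beta\,\mathbbm{1}[s\ne 0]+\gamma\,\mathrm{E}[V^*(s+X)].
\end{align*}
The key observation is that ``mig to $\delta$'' pays both the migration cost $1$ (as does ``mig to $0$'') and the transmission cost $\beta$ (as does ``no mig'' when $s\ne 0$): its one-step cost strictly exceeds that of each alternative, and the only way it could be optimal is if its post-action state $\delta+X$ led to a strictly smaller expected future value than both $X$ and $s+X$.

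I would rule out any such advantage by a reference-frame coupling argument. Given the optimal continuation starting from post-state $\delta+X$, I construct a shifted continuation starting from post-state $X$ whose service trajectory is always $\delta$ away from the original's; under this shift every migration happens at the same time with the same cost, so the two trajectories differ only in per-slot transmission costs. A direct bookkeeping shows that the total discounted transmission discrepancy between the two coupled trajectories is at most $\beta$, exactly cancelling the surplus in the one-step cost of ``mig to $\delta$'' versus ``mig to $0$''. Running the same construction with shift $\delta-s$ relates the continuation from $\delta+X$ to one from $s+X$ and shows that the corresponding discrepancy is at most $1$, cancelling the migration-cost overhead of ``mig to $\delta$'' versus ``no mig''. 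Together these yield $V_{\text{mig to }\delta}(s)\ge\min\{V_{\text{mig to }0}(s),V_{\text{no mig}}(s)\}$ for every $\delta\ne 0,s$, as required.

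The hard part will be the rigorous discounted accounting of the transmission discrepancies under the coupling, especially in the asymmetric-walk regime, where a nonzero drift can make the shifted trajectory spend systematically more time at nonzero offsets than the original. I would address this by an explicit per-slot comparison, combined with the universal bound $V^*(s')\le 1+V^*(0)$ (available because ``migrate to user'' is feasible from every state, and $V^*(0)=\gamma\,\mathrm{E}[V^*(X)]$ from the Bellman equation applied at $s=0$ with zero-cost ``no mig''). This bound caps the cumulative transmission discrepancy that any coupling can accumulate, ensuring that the savings from pre-positioning at a nonzero offset can never recoup the extra one-step cost already incurred.
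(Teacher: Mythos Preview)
Your shift-coupling does not deliver the bounds you claim. Under a rigid shift by $\delta$, the two coupled trajectories differ in transmission cost by $\beta\bigl(\mathbbm{1}[s_t^*\neq 0]-\mathbbm{1}[s_t^*\neq\delta]\bigr)$ at each slot, and these differences can accumulate: if the optimal continuation from $\delta+X$ migrates to offset $0$ at every step, the shifted policy sits at offset $-\delta$ forever and pays an extra $\beta/(1-\gamma)$, not $\beta$. The same blow-up occurs for the $(\delta-s)$-shift. Your proposed patch, the bound $V^*(s')\le 1+V^*(0)$, would turn $\gamma\,\mathrm{E}[V^*(s+X)-V^*(\delta+X)]\le 1$ into a true statement only if you also had the companion lower bound $V^*(\delta+X)\ge V^*(0)$; you neither state nor prove this, and it is not a consequence of anything in your outline. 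So as written the argument has a genuine hole at exactly the step you flag as ``the hard part.''

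The paper avoids all of this by a different and much shorter coupling. It compares ``migrate to $u\neq u_0$'' only against ``no migration'' (not against ``migrate to $0$''), and instead of a rigid shift it uses a \emph{defer-and-resynchronize} construction: do not migrate now; thereafter wait until the first time $t_m$ at which the original policy either migrates again or the user actually reaches $u$; at $t_m$ migrate so that the two host locations coincide from then on. On every sample path this saves exactly one unit of cost at time $t_{u_0}$ (you pay $\beta$ instead of $1+\beta$), pays at most one extra unit at the strictly later time $t_m$, and matches the original cost at all other times. Discounting with $0<\gamma<1$ makes the deferred policy strictly better pathwise, with no need to bound value-function differences at all. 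If you want to rescue your Bellman-style argument, the cleanest route is to replace the shift coupling by this resynchronizing one.
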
\begin{proof} We
consider an arbitrary trajectory of the mobile user. Denote $t_{u}$
as the first timeslot (starting from the current timeslot) that the
mobile user is in the location indexed $u$. Assume that the user
is currently in location $u_{0}$, then the current timeslot is $t_{u_{0}}$. 

Case 1 -- migrating to location $u\neq u_{0}$ at $t_{u_{0}}$: This
incurs a cost of $\beta+1$ at timeslot $t_{u_{0}}$, because $t_{u}>t_{u_{0}}$
as a node cannot be in two different locations at the same time. Define
a variable $t_{m}\in\left[t_{u_{0}}+1,t_{u}\right]$ being the largest
timeslot index such that we do not perform further migration at timeslots
within the interval $\left[t_{u_{0}}+1,t_{m}-1\right]$, which means
that either we perform migration at $t_{m}$ or we have $t_{m}=t_{u}$.
Then, we have a cost of $\beta$ at each of the timeslots $t\in\left[t_{u_{0}}+1,t_{m}-1\right]$. 

Case 2 -- no migration at $t_{u_{0}}$: In this case, the cost at
each timeslot $t\in\left[t_{u_{0}},t_{m}-1\right]$ is either $\beta$
(if $s_{t}=u_{t}-h_{t}\neq0$) or zero (if $s_{t}=0$). For the timeslot
$t_{m}$, we construct the following policy. If $t_{m}<t_{u}$, we
migrate to the same location as in Case 1, which means that the cost
at $t_{m}$ cannot be larger than that in Case 1. If $t_{m}=t_{u}$,
we migrate to $u$, which can increase the cost at $t_{m}$ by at
most one unit compared with the cost in Case 1. With the above policy,
the costs at timeslots $t>t_{m}$ in Cases 1 and 2 are the same.

The cost at $t_{u_{0}}$ in Case 1 is one unit larger than that in
Case 2, and the cost at $t_{m}$ in Case 1 is at most one unit smaller
than that in Case 2. Because $0<\gamma<1$, Case 2 brings lower discounted
sum cost than Case 1. Therefore, there exists a better policy than
migrating to $u\neq u_{0}$ at $t_{u_{0}}$. This holds for any movement
pattern of the mobile user, and it ensures that the cost in any timeslot
is either $0$, $1$, or $\beta$. \end{proof}

\begin{figure*}
\center{\includegraphics[width=0.9\linewidth]{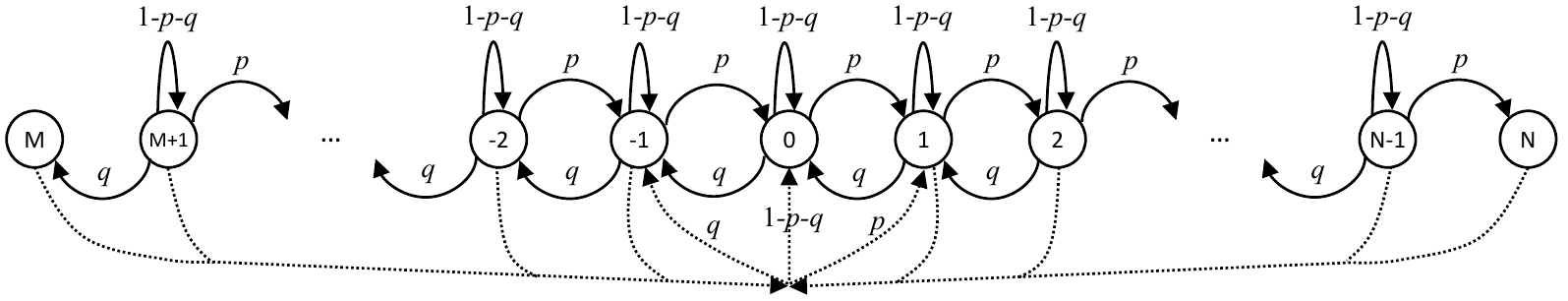}}

\protect\caption{MDP model for service migration. The solid lines denote transition
under action $a=0$ and the dotted lines denote transition under action
$a=1$. When taking action $a=1$ from any state, the next state is
$s=-1$ with probability $q$, $s=0$ with probability $1-p-q$, or
$s=1$ with probability $p$. }
\label{fig:states} 
\end{figure*}

From Lemma \ref{lemma:NotToOtherNodes}, we only have two candidate
actions, which are migrating to the current user location or not migrating.
This simplifies the action space to two actions: a migration action,
denoted as $a=1$; and a no-migration action, denoted as $a=0$. In
practice, there is usually a limit on the maximum allowable distance
between the mobile user and the micro-cloud hosting its service for
the service to remain usable. We model this limitation by a maximum
negative offset $M$ and a maximum positive offset $N$ (where $M<0,N>0$),
such that the service must be migrated ($a\equiv1$) when the node
enters state $M$ or $N$. This implies that, although the node can
move in an unbounded space, the state space of our MDP for service
control is finite. The overall transition diagram of the resulting
MDP is illustrated in Fig. \ref{fig:states}. Note that because each
state transition is the concatenated effect of (possible) migration
and node movement, and the states are defined as the offset between
node and host location, the next state after taking a migration action
is either $-1$, $0$, or $1$.

With the above considerations, the cost function in (\ref{eq:oneTimeslotCostWords})
can be modified to the following:
\begin{equation}
C_{a}(s)=\begin{cases}
0, & \textrm{if }s=0\\
\beta, & \textrm{if }s\neq0,M<s<N,a=0\\
1, & \textrm{if }s\neq0,M\leq s\leq N,a=1
\end{cases}\label{eq:oneTimeslotCostActions}
\end{equation}

With the one-timeslot cost defined as in (\ref{eq:oneTimeslotCostActions}),
we obtain the following Bellman's equations for the discounted sum
cost when respectively taking action $a=0$ and $a=1$:
\begin{equation}
V(s|a=0)=\begin{cases}
\gamma\sum_{j=-1}^{1}p_{0j}V(j), & \textrm{if }s=0\\
\beta+\gamma\sum_{j=s-1}^{s+1}p_{sj}V(j), & \textrm{if }s\neq0,M\!\!<\!\! s\!\!<\!\! N
\end{cases}\label{eq:balanceEqu0}
\end{equation}
\begin{equation}
V(s|a=1)=\begin{cases}
\gamma\sum_{j=-1}^{1}p_{0j}V(j), & \textrm{if }s=0\\
1+\gamma\sum_{j=-1}^{1}p_{0j}V(j), & \textrm{if }s\neq0,M\!\!\leq\!\! s\!\!\leq\!\! N
\end{cases}\label{eq:balanceEqu1}
\end{equation}
where $p_{ij}$ denotes the (one-step) transition probability from
state $i$ to state $j$, their specific values are related to parameters
$p$ and $q$ as defined earlier. The optimal cost $V(s)$ is 
\begin{equation}
V(s)=\begin{cases}
\min\{V(s|a=0),V(s|a=1)\}, & \textrm{if }M\!<\! s\!<\! N\\
V(s|a=1), & \textrm{if }s\!=\! M\textrm{ or }s\!=\! N
\end{cases}\label{eq:optCost}
\end{equation}

\section{Optimal Threshold Policy}

\subsection{Existence of Optimal Threshold Policy}

We first show that there exists a threshold policy which is optimal
for the MDP in Fig. \ref{fig:states}.

\begin{proposition}\label{prop:threshExistence}There exists a threshold
policy $(k_{1},k_{2})$, where $M<k_{1}\leq0$ and $0\leq k_{2}<N$,
such that when $k_{1}\leq s\leq k_{2}$, the optimal action for state
$s$ is $a^{*}(s)=0$, and when $s<k_{1}$ or $s>k_{2}$, $a^{*}(s)=1$.\end{proposition}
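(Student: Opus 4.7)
The plan is to exploit a simple structural observation from (5): for every $s\ne 0$, the cost of migrating $V(s\mid a=1)=1+V(0)$ is independent of $s$, since the post-migration transition is drawn from state $0$ regardless of where the migration is triggered. Hence at any interior $s\ne 0$, the optimal action is $a^{*}(s)=0$ iff $V(s\mid a=0)\le 1+V(0)$, and the threshold property reduces to showing that the no-migrate value $V(s\mid a=0)$ is monotone ``outward'', i.e., non-decreasing in $s$ on $\{1,\dots,N-1\}$ and non-increasing in $s$ on $\{M+1,\dots,-1\}$.

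The first step is therefore to establish that the optimal value function $V$ itself is unimodal with minimum at $0$: non-decreasing on $\{0,\dots,N\}$ and non-increasing on $\{M,\dots,0\}$. I would prove this by induction on the value-iteration sequence $V_{n}$ that converges to $V$ (convergence being guaranteed since $0<\gamma<1$), starting from $V_{0}\equiv 0$. Away from $s=0$ the inductive step is clean: by (4), $V_{n+1}(s\mid a=0)=\beta+\gamma[qV_{n}(s-1)+(1-p-q)V_{n}(s)+pV_{n}(s+1)]$ is a sliding three-point average, so monotonicity of $V_{n}$ on $\{0,\dots,N\}$ immediately yields monotonicity of $V_{n+1}(\cdot\mid a=0)$ on $\{1,\dots,N-1\}$; taking the pointwise minimum in (6) with the $s$-independent constant $V_{n+1}(s\mid a=1)$ then preserves monotonicity, and the boundary value $V_{n+1}(N)=1+W_{n}$ fits into the monotone envelope by construction. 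A symmetric argument handles the negative arm.

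The main obstacle is pushing the monotone comparison across $s=0$: showing inductively that $V_{n+1}(0)\le V_{n+1}(\pm 1)$, because the three-point kernel at $s=0$ averages over both arms of the unimodal profile of $V_{n}$ while the kernels at $s=\pm 1$ remain on a single arm together with the vertex. I would address this by strengthening the inductive hypothesis with the two-sided bound $V_{n}(0)\le V_{n}(s)\le V_{n}(0)+1$, which encodes the fact that migrating from any state to co-location costs at most one unit above the optimum. The upper bound is preserved automatically because $V_{n+1}(s)\le V_{n+1}(s\mid a=1)=1+V_{n+1}(0)$ by (5) and (6), and the stronger two-sided control---together with $0<\gamma<1$---suffices to verify $V_{n+1}(0)\le V_{n+1}(\pm 1)$ and close the induction.

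Once $V$ has been shown to be unimodal, $V(s\mid a=0)$ inherits the outward monotonicity directly from (4). The level set $\{s:V(s\mid a=0)\le 1+V(0)\}$ is then an interval containing $0$: I would define $k_{2}$ as the largest $s\in\{0,\dots,N-1\}$ in this set and $k_{1}$ as the smallest $s\in\{M+1,\dots,0\}$ in it, and at $s=0$ both actions coincide so $a^{*}(0)=0$ may be chosen. The strict inequalities $M<k_{1}$ and $k_{2}<N$ follow from the forced-migration boundary condition in (6), which pins $a^{*}(M)=a^{*}(N)=1$.
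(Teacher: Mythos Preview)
Your approach has a genuine gap: the unimodality of $V$ (with minimum at $s=0$) that you plan to establish by value-iteration induction is \emph{false} in general for this model, so the argument cannot close. A concrete counterexample is $p=0$, $q=1$, $\gamma=0.9$, $\beta=0.1$, $M=-10$, $N=10$. Here transitions are deterministic ($s\to s-1$), the optimal policy is the trivial threshold policy $k_1=M+1$, $k_2=N-1$, and one computes $V(-1)\approx 1.535$, $V(0)=\gamma V(-1)\approx 1.382$, $V(1)=\beta+\gamma V(0)\approx 1.344$, so $V(1)<V(0)$. More generally, along the positive arm $V(s)=\beta\frac{1-\gamma^{s}}{1-\gamma}+\gamma^{s}V(0)$ is strictly \emph{decreasing} in $s$ whenever $\beta<(1-\gamma)V(0)$, which occurs for small $\beta$ and strong leftward drift. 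Thus neither $V$ nor $V(\cdot\mid a=0)$ is monotone outward, and the level-set argument you propose does not yield an interval.

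The specific place your induction breaks is exactly the ``main obstacle'' you flagged: at $s=1$ one has $V_{n+1}(1\mid a=0)-V_{n+1}(0)=\beta+\gamma q\bigl(V_n(0)-V_n(-1)\bigr)+\text{(nonneg.)}$, and the strengthened hypothesis $V_n(-1)-V_n(0)\le 1$ only gives the lower bound $\beta-\gamma q$, which is negative whenever $\beta<\gamma q$. So the two-sided control together with $\gamma<1$ is \emph{not} sufficient to force $V_{n+1}(0)\le V_{n+1}(1)$, and indeed in the limit the inequality reverses in the example above.

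The paper avoids monotonicity of $V$ altogether and argues pathwise. The crucial structural fact is the one you already noted: $V(s\mid a=1)=1+V(0)$ is constant in $s\neq 0$. The paper shows directly that if migration is optimal at $s=k_2+1$, then $V(k_2+1\mid a=1)\le\beta/(1-\gamma)$, and for any $s'>k_2+1$ every sample path that defers migration accrues $\beta$ per slot until first hitting some migration state (necessarily before reaching $0$), at which point the continuation value is again the same constant $V(k_2+1\mid a=1)$; a one-line comparison then gives $V_L(s')\ge V(k_2+1\mid a=1)=V(s'\mid a=1)$. This is what replaces your monotonicity step.
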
\begin{proof} It
is obvious that different actions for state zero $a(0)=0$ and $a(0)=1$
are essentially the same, because the mobile user and the micro-cloud
hosting its service are in the same location under state zero, either
action does not incur cost and we always have $C_{a(0)}(0)=0$. Therefore,
we can conveniently choose $a^{*}(0)=0$. 

In the following, we show that, if it is optimal to migrate at $s=k_{1}-1$
and $s=k_{2}+1$, then it is optimal to migrate at all states $s$
with $M\leq s\leq k_{1}-1$ or $k_{2}+1\leq s\leq N$. We relax the
restriction that we always migrate at states $M$ and $N$ for now,
and later discuss that the results also hold for the unrelaxed case.
We only focus on $k_{2}+1\leq s\leq N$, because the case for $M\leq s\leq k_{1}-1$
is similar.

If it is optimal to migrate at $s=k_{2}+1$, we have 
\begin{equation}
V(k_{2}+1|a=1)\leq\beta\sum_{t=0}^{\infty}\gamma^{t}=\frac{\beta}{1-\gamma}\label{eq:threshProofIneq}
\end{equation}
where the right hand-side of (\ref{eq:threshProofIneq}) is the discounted
sum cost of a never-migrate policy supposing that the user never returns
back to state zero when starting from state $s=k_{2}+1$. This cost
is an upper bound of the cost incurred from any possible state-transition
path without migration, and migration cannot bring higher cost than
this because otherwise it contradicts with the presumption that it
is optimal to migrate at $s=k_{2}+1$.

Suppose we do not migrate at state $s'$ where $k_{2}+1<s'\leq N$,
then we have a (one-timeslot) cost of $\beta$ in each timeslot until
the user reaches a migration state (i.e., a state at which we perform
migration). From (\ref{eq:balanceEqu1}), we know that $V(s|a=1)$
is constant for $s\neq0$. Therefore, any state-transition path $L$
starting from state $s'$ has a discounted sum cost of
\[
V_{L}(s')=\beta\sum_{t=0}^{t_{m}-1}\gamma^{t}+\gamma^{t_{m}}V(k_{2}+1|a=1)
\]
where $t_{m}>0$ is a parameter representing the first timeslot that
the user is in a migration state after reaching state $s'$ (assuming
that we reach state $s'$ at $t=0$), which is dependent on the state-transition
path $L$. We have
\begin{align*}
 & V_{L}(s')-V(k_{2}+1|a=1)\\
 & =\beta\frac{\left(1-\gamma^{t_{m}}\right)}{1-\gamma}-\left(1-\gamma^{t_{m}}\right)V(k_{2}+1|a=1)\\
 & =\left(1-\gamma^{t_{m}}\right)\left(\frac{\beta}{1-\gamma}-V(k_{2}+1|a=1)\right)\geq0
\end{align*}
where the last inequality follows from (\ref{eq:threshProofIneq}).
It follows that, for any possible state-transition path $L$, $V_{L}(s')\geq V(k_{2}+1|a=1)$.
Hence, it is always optimal to migrate at state $s'$, which brings
cost $V(s'|a=1)=V(k_{2}+1|a=1)$.

The result also holds with the restriction that we always migrate
at states $M$ and $N$, because no matter what thresholds $(k_{1},k_{2})$
we have for the relaxed problem, migrating at states $M$ and $N$
always yield a threshold policy. \end{proof}

Proposition \ref{prop:threshExistence} shows the existence of an
optimal threshold policy. The optimal threshold policy exists for
arbitrary values of $M$, $N$, $p$, and $q$.

\subsection{Simplifying the Cost Calculation }

The existence of the optimal threshold policy allows us simplify the
cost calculation, which helps us develop an algorithm that has lower
complexity than standard MDP solution algorithms. When the thresholds
are given as $\left(k_{1},k_{2}\right)$, the value updating function
(\ref{eq:optCost}) is changed to the following: 
\begin{equation}
V(s)=\begin{cases}
V(s|a=0), & \textrm{if }k_{1}\leq s\leq k_{2}\\
V(s|a=1), & \textrm{otherwise}
\end{cases}\label{eq:optCostGivenTh}
\end{equation}
From (\ref{eq:balanceEqu0}) and (\ref{eq:balanceEqu1}), we know
that, for a given policy with thresholds $(k_{1},k_{2})$, we only
need to compute $V(s)$ with $k_{1}-1\leq s\leq k_{2}+1$, because
the values of $V(s)$ with $s\leq k_{1}-1$ are identical, and the
values of $V(s)$ with $s\geq k_{2}-1$ are also identical. Note that
we always have $k_{1}-1\geq M$ and $k_{2}+1\leq N$, because $k_{1}>M$
and $k_{2}<N$ as we always migrate when at states $M$ and $N$. 

Define
\begin{equation}
\mathbf{v}_{\left(k_{1},k_{2}\right)}=\left[V(k_{1}-1)\,\, V(k_{1})\cdots V(0)\cdots V(k_{2})\,\, V(k_{2}+1)\right]^{\mathrm{T}}\label{eq:defMatrixVk}
\end{equation}
\begin{align} 
\mathbf{c}_{\left(k_{1},k_{2}\right)}=\begin{array}{c} 
\left[\begin{array}{ccccccccc} 
1 & \undermat{-k_1\textrm{ elements}}{\beta & \!\!\cdots\!\! & \beta} & 0 & \undermat{k_2\textrm{ elements}}{\beta & \!\!\cdots\!\! & \beta} & 1
\end{array}\right]^{\mathrm{T}}
\end{array}
\nonumber  \\ \,
\label{eq:defMatrixC} 
\end{align}
\begin{equation}
\mathbf{P}_{\left(k_{1},k_{2}\right)}'=\left[\!\!\begin{array}{ccccc}
p_{0,k_{1}-1} & \cdots & p_{00} & \cdots & p_{0,k_{2}+1}\\
p_{k_{1},k_{1}-1} & \cdots & p_{k_{1},0} & \cdots & p_{k_{1},k_{2}+1}\\
\vdots &  & \vdots &  & \vdots\\
p_{0,k_{1}-1} & \cdots & p_{00} & \cdots & p_{0,k_{2}+1}\\
\vdots &  & \vdots &  & \vdots\\
p_{k_{2},k_{1}-1} & \cdots & p_{k_{2},0} & \cdots & p_{k_{2},k_{2}+1}\\
p_{0,k_{1}-1} & \cdots & p_{00} & \cdots & p_{0,k_{2}+1}
\end{array}\!\!\right]\label{eq:defMatrixP'}
\end{equation}
where superscript $\mathrm{T}$ denotes the transpose of the matrix. 

Then, (\ref{eq:balanceEqu0}) and (\ref{eq:balanceEqu1}) can be rewritten
as
\begin{equation}
\mathbf{v}_{\left(k_{1},k_{2}\right)}=\mathbf{c}_{\left(k_{1},k_{2}\right)}+\gamma\mathbf{P}_{\left(k_{1},k_{2}\right)}'\mathbf{v}_{\left(k_{1},k_{2}\right)}\label{eq:vectorBalanceEqu}
\end{equation}
The value vector $\mathbf{v}_{\left(k_{1},k_{2}\right)}$ can be obtained
by
\begin{equation}
\mathbf{v}_{\left(k_{1},k_{2}\right)}=\left(\mathbf{I}-\gamma\mathbf{P}_{\left(k_{1},k_{2}\right)}'\right)^{-1}\mathbf{c}_{\left(k_{1},k_{2}\right)}\label{eq:evalVk}
\end{equation}

The matrix $\left(\mathbf{I}-\gamma\mathbf{P}_{\left(k_{1},k_{2}\right)}'\right)$
is invertible for $0<\gamma<1$, because in this case there exists
a unique solution for $\mathbf{v}_{\left(k_{1},k_{2}\right)}$ from
(\ref{eq:vectorBalanceEqu}). Equation (\ref{eq:evalVk}) can be computed
using Gaussian elimination that has a complexity of $O\left(\left(|M|+N\right)^{3}\right)$.
However, noticing that $\mathbf{P}_{\left(k_{1},k_{2}\right)}'$ is
a sparse matrix (because $p_{ij}=0$ for $|j-i|>1$), there can exist
more efficient algorithms to compute (\ref{eq:evalVk}).

\subsection{Algorithm for Finding the Optimal Thresholds }

To find the optimal thresholds, we can perform a search on values
of $\left(k_{1},k_{2}\right)$. Further, because an increase/decrease
in $V(s)$ for some $s$ increases/decreases each element in the cost
vector $\mathbf{v}$ due to cost propagation following balance equations
(\ref{eq:balanceEqu0}) and (\ref{eq:balanceEqu1}), we only need
to minimize $V(s)$ for a specific state $s$. We propose an algorithm
for finding the optimal thresholds, as shown in Algorithm \ref{alg:thresh},
which is a modified version of the standard policy iteration mechanism
\cite[Ch. 3]{powell2007approximate}.

\begin{algorithm} 
\caption{Modified policy iteration algorithm for finding the optimal thresholds} 
\label{alg:thresh} 
\begin{algorithmic}[1] 
\STATE Initialize $k^*_1 \leftarrow 0$, $k^*_2 \leftarrow 0$
\REPEAT \label{AlgLargeLoopStart}
\STATE $k'^*_1\leftarrow k^*_1$, $k'^*_2\leftarrow k^*_2$ //record previous thresholds

\STATE Construct  $\mathbf{c}_{\left(k^*_{1},k^*_{2}\right)}$  and $\mathbf{P}'_{\left(k^*_{1},k^*_{2}\right)}$ according to (\ref{eq:defMatrixC}) and (\ref{eq:defMatrixP'}) \label{AlgExactSolutionStart}
\STATE Evaluate $\mathbf{v}_{\left(k^*_{1},k^*_{2}\right)}$ from (\ref{eq:evalVk})
\STATE Extend $\mathbf{v}_{\left(k^*_{1},k^*_{2}\right)}$ to obtain $V(s)$ for all $M \leq s \leq N$  \label{AlgExactSolutionEnd}

\FOR {$i=1,2$}
\IF {$i=1$} \label{AlgSearchDirectionStart}
\IF {$1+\gamma\sum_{j=-1}^{1}p_{0 j}V(j) < V(k^*_1)$} \label{AlgCheckk1}
\STATE dir $\leftarrow 1$, loopVec $\leftarrow[k^*_1+1,k^*_1+2,...,0]$
\STATE $k^*_1 \leftarrow k^*_1+1$
\ELSE
\STATE dir $\leftarrow 0$, loopVec $\leftarrow[k^*_1-1,k^*_1-2,...,M+1]$
\ENDIF
\ELSIF {$i=2$}
\IF {$1+\gamma\sum_{j=-1}^{1}p_{0 j}V(j) < V(k^*_2)$} \label{AlgCheckk2}
\STATE dir $\leftarrow 1$, loopVec $\leftarrow[k^*_2-1,k^*_2-2,...,0]$
\STATE $k^*_2 \leftarrow k^*_2-1$
\ELSE
\STATE dir $\leftarrow 0$, loopVec $\leftarrow[k^*_2+1,k^*_2+2,...,N-1]$
\ENDIF
\ENDIF \label{AlgSearchDirectionEnd}

\FOR {$k_i=$ each value in loopVec} \label{AlgChangeThreshStart}
\IF {dir $=0$} 
\IF {$\beta+\gamma\sum_{j=k_i-1}^{k_i +1}p_{k_i,j}V(j) < V(k_i)$} \label{AlgChangeCheckRev0}
\STATE $k^*_i\leftarrow k_i$
\ELSIF {$\beta+\gamma\sum_{j=k_i-1}^{k_i +1}\!p_{k_i,j}V(j)\! >\! V(k_i)$} \label{AlgChangeCheckRev0_larger}
\STATE \textbf{exit for}
\ENDIF
\ELSIF {dir $=1$}
\IF {$1+\gamma\sum_{j=-1}^{1}p_{0 j}V(j) < V(k_i)$} \label{AlgChangeCheckRev1}
\STATE $k^*_i\leftarrow k_i-\textrm{sign}(k_i)$
\ELSIF {$1+\gamma\sum_{j=-1}^{1}p_{0 j}V(j) > V(k_i)$}  \label{AlgChangeCheckRev1_larger}
\STATE \textbf{exit for}
\ENDIF
\ENDIF 
\ENDFOR  \label{AlgChangeThreshEnd}

\ENDFOR 

\UNTIL{$k^*_1= k'^*_1$ and $k^*_2= k'^*_2$}
\RETURN $k^*_1$, $k^*_2$  
\end{algorithmic} 
\end{algorithm} 

Algorithm \ref{alg:thresh} is explained as follows. We keep iterating
until the thresholds no longer change, which implies that the optimal
thresholds have been found. The thresholds $\left(k_{1}^{*},k_{2}^{*}\right)$
are those obtained from each iteration. 

Lines \ref{AlgExactSolutionStart}--\ref{AlgExactSolutionEnd} compute
$V(s)$ for all $s$ under the given thresholds $\left(k_{1}^{*},k_{2}^{*}\right)$.
Then, Lines \ref{AlgSearchDirectionStart}--\ref{AlgSearchDirectionEnd}
determine the search direction for $k_{1}$ and $k_{2}$. Because
$V(s)$ in each iteration is computed using the current thresholds
$\left(k_{1}^{*},k_{2}^{*}\right)$, we have actions $a(k_{1}^{*})=a(k_{2}^{*})=0$,
and (\ref{eq:balanceEqu0}) is automatically satisfied when replacing
its left hand-side with $V(k_{1}^{*})$ or $V(k_{2}^{*})$. Lines
\ref{AlgCheckk1} and \ref{AlgCheckk2} check whether iterating according
to (\ref{eq:balanceEqu1}) can yield lower cost. If it does, it means
that migrating is a better action at state $k_{1}^{*}$ (or $k_{2}^{*}$),
which also implies that we should migrate at states $s$ with $M\leq s\leq k_{1}^{*}$
(or $k_{2}^{*}\leq s\leq N$) according to Proposition \ref{prop:threshExistence}.
In this case, $k_{1}^{*}$ (or $k_{2}^{*}$) should be set closer
to zero, and we search through those thresholds that are closer to
zero than $k_{1}^{*}$ (or $k_{2}^{*}$). If Line \ref{AlgCheckk1}
(or Line \ref{AlgCheckk2}) is not satisfied, according to Proposition
\ref{prop:threshExistence}, it is good not to migrate at states $s$
with $k_{1}^{*}\leq s\leq0$ (or $0\leq s\leq k_{2}^{*}$), so we
search $k_{1}$ (or $k_{2}$) to the direction approaching $M$ (or
$N$), to see whether it is good not to migrate under those states. 

Lines \ref{AlgChangeThreshStart}--\ref{AlgChangeThreshEnd} adjust
the thresholds. If we are searching toward state $M$ or $N$ and
Line \ref{AlgChangeCheckRev0} is satisfied, it means that it is better
not to migrate under this state ($k_{i}$), and we update the threshold
to $k_{i}$. When Line \ref{AlgChangeCheckRev0_larger} is satisfied,
it means that it is better to migrate at state $k_{i}$. According
to Proposition \ref{prop:threshExistence}, we should also migrate
at any state closer to $M$ or $N$ than state $k_{i}$, thus we exit
the loop. If we are searching toward state zero and Line \ref{AlgChangeCheckRev1}
is satisfied, it is good to migrate under this state ($k_{i}$), therefore
the threshold is set to one state closer to zero ($k_{i}-\textrm{sign}(k_{i})$).
When Line \ref{AlgChangeCheckRev1_larger} is satisfied, we should
not migrate at state $k_{i}$. According to Proposition \ref{prop:threshExistence},
we should also not migrate at any state closer to zero than state
$k_{i}$, and we exit the loop.

\begin{proposition} \label{prop:SingleChangePolicyIteration} The
threshold-pair $\left(k_{1}^{*},k_{2}^{*}\right)$ is different in
every iteration of the loop starting at Line \ref{AlgLargeLoopStart},
otherwise the loop terminates. \end{proposition}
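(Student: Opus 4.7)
The plan is to interpret the proposition as saying that across all iterations of the outer loop, the sequence of threshold pairs is strictly non-repeating: either the current pair $(k_1^*,k_2^*)$ differs from every previously seen pair, or the termination condition $k_1^*=k_1'^*$ and $k_2^*=k_2'^*$ fires and the loop exits. The natural vehicle for proving this is the standard policy-improvement argument, adapted to the restricted class of threshold policies which Proposition \ref{prop:threshExistence} shows to be sufficient.

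First I would set up the comparison between consecutive iterations. Let $\mathbf{v}_{(k_1^*,k_2^*)}$ be the value vector computed from (\ref{eq:evalVk}) at the start of an iteration. The checks on Lines \ref{AlgCheckk1}, \ref{AlgCheckk2}, \ref{AlgChangeCheckRev0}, and \ref{AlgChangeCheckRev1} are exactly the one-step policy-improvement tests: each asks whether swapping the action at one state strictly lowers the one-step cost-to-go, using (\ref{eq:balanceEqu0}) and (\ref{eq:balanceEqu1}) evaluated against the \emph{current} $V(\cdot)$. If none of these strict inequalities fires anywhere in the iteration, then both $k_1^*$ and $k_2^*$ are left unchanged and the \textbf{until} condition terminates the loop; this is the ``otherwise'' alternative of the proposition.

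Second, I would show that whenever at least one of those strict improvement tests does fire, the new threshold pair $(k_1^*,k_2^*)$ yields a cost vector that is componentwise no larger and strictly smaller in at least one component than the previous vector. This is the classical policy-improvement theorem applied to the finite-state, discounted MDP of Fig. \ref{fig:states}: replacing the action at any state by one with strictly smaller right-hand side in Bellman's equation produces a new policy whose fixed-point value vector dominates the previous one, with strict dominance at the altered state. Because $\gamma>0$ and the transition matrix has no absorbing cost-zero structure that decouples the altered state from the others (the chain can reach any state from any other, including state $0$), strict improvement propagates through (\ref{eq:vectorBalanceEqu}) to at least the single altered entry, hence $\mathbf{v}$ strictly decreases in some component.

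Finally, I combine these observations. In each iteration, either (a) no improvement test fires, thresholds are unchanged, and the loop terminates; or (b) some threshold is updated and the associated value vector $\mathbf{v}_{(k_1^*,k_2^*)}$ is strictly better than the previous one. Since distinct threshold pairs determine $\mathbf{v}_{(k_1^*,k_2^*)}$ uniquely through (\ref{eq:evalVk}), strict monotone improvement rules out revisiting any previously seen pair. Hence the sequence of threshold pairs is strictly non-repeating until termination. The main obstacle I anticipate is the propagation step in the second paragraph: one must argue carefully that even though the inner \textbf{for} loop may update a threshold before recomputing $V(\cdot)$, the monotone improvement is preserved, which ultimately relies on the fact that in a single outer iteration only actions that were strictly better relative to a fixed $V$ are adopted, so the resulting policy still dominates the previous one in the usual pointwise partial order.
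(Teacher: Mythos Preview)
Your proposal is correct and follows essentially the same approach as the paper's proof: the policy-improvement argument guarantees that whenever a threshold changes the associated value vector strictly decreases, and since each threshold pair determines its value vector uniquely via (\ref{eq:evalVk}), no pair can recur without triggering termination. The paper's own proof is considerably terser---it simply asserts that $V(s)$ becomes smaller for all $s$ at each iteration and invokes uniqueness of the stationary cost---whereas you spell out the one-step improvement checks and flag the subtlety about multiple updates against a fixed $V$, but the underlying idea is identical.
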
\begin{proof}  The
loop starting at Line \ref{AlgLargeLoopStart} changes $k_{1}^{*}$
and $k_{2}^{*}$ in every iteration so that $V(s)$ for all $s$ become
smaller. It is therefore impossible that $k_{1}^{*}$ and $k_{2}^{*}$
are the same as in one of the previous iterations and at the same
time reduce the value of $V(s)$, because $V(s)$ computed from (\ref{eq:evalVk})
is the stationary cost value for thresholds $\left(k_{1}^{*},k_{2}^{*}\right)$.
The only case when $\left(k_{1}^{*},k_{2}^{*}\right)$ are the same
as in the previous iteration (which does not change $V(s)$) terminates
the loop. \end{proof}

\begin{corollary} \label{cor:AlgComplexity} The number of iterations
in Algorithm \ref{alg:thresh} is $O(|M|N)$. \end{corollary}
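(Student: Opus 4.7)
The plan is to combine Proposition \ref{prop:SingleChangePolicyIteration} with a simple counting argument over the finite state space. Proposition \ref{prop:SingleChangePolicyIteration} establishes that the threshold-pair $(k_1^*, k_2^*)$ is strictly different in every iteration of the outer loop (Line \ref{AlgLargeLoopStart}), except for the final iteration where the loop terminates. This monotone/no-repeat property is the key ingredient, since it means the algorithm cannot revisit any threshold pair and therefore cannot cycle.

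Next, I would enumerate the admissible threshold pairs. By the definition in Proposition \ref{prop:threshExistence} together with the convention that we always migrate at the boundary states $M$ and $N$, the first threshold satisfies $M < k_1^* \leq 0$, giving at most $|M|$ distinct values for $k_1^*$, and the second threshold satisfies $0 \leq k_2^* < N$, giving at most $N$ distinct values for $k_2^*$. Hence the set of reachable threshold pairs has cardinality at most $|M|\cdot N$.

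Combining the two observations, I would conclude that the outer loop can run for at most $|M|\cdot N + 1$ iterations (the ``$+1$'' accounts for the terminating iteration that detects $k_1^* = k_1'^*$ and $k_2^* = k_2'^*$), which is $O(|M|N)$.

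I do not foresee a significant obstacle here: the whole argument is a pigeonhole-style bound on the outer loop, and Proposition \ref{prop:SingleChangePolicyIteration} has already done the nontrivial work of forbidding repeated threshold pairs. The only mild subtlety to state carefully is that the corollary counts outer-loop iterations and not the work performed per iteration (which involves solving \eqref{eq:evalVk} and the inner \textbf{for} loops); no claim is being made about overall wall-clock complexity, only about the number of policy-update rounds.
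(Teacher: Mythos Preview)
Your proposal is correct and follows essentially the same approach as the paper: invoke Proposition \ref{prop:SingleChangePolicyIteration} to rule out repeated threshold pairs, count the admissible pairs as $|M|\cdot N$, and conclude the loop runs at most $|M|N+1$ times. The paper states this in a single sentence, but the argument is identical.
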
\begin{proof} 
According to Proposition \ref{prop:SingleChangePolicyIteration},
there can be at most $|M|N+1$ iterations in the loop starting at
Line \ref{AlgLargeLoopStart}. \end{proof}

If we use Gaussian elimination to compute (\ref{eq:evalVk}), the
time-complexity of Algorithm~\ref{alg:thresh} is $O\left(|M|N\left(|M|+N\right)^{3}\right)$. 

\begin{figure*}
\center{\subfigure[]{\includegraphics[width=0.28\linewidth]{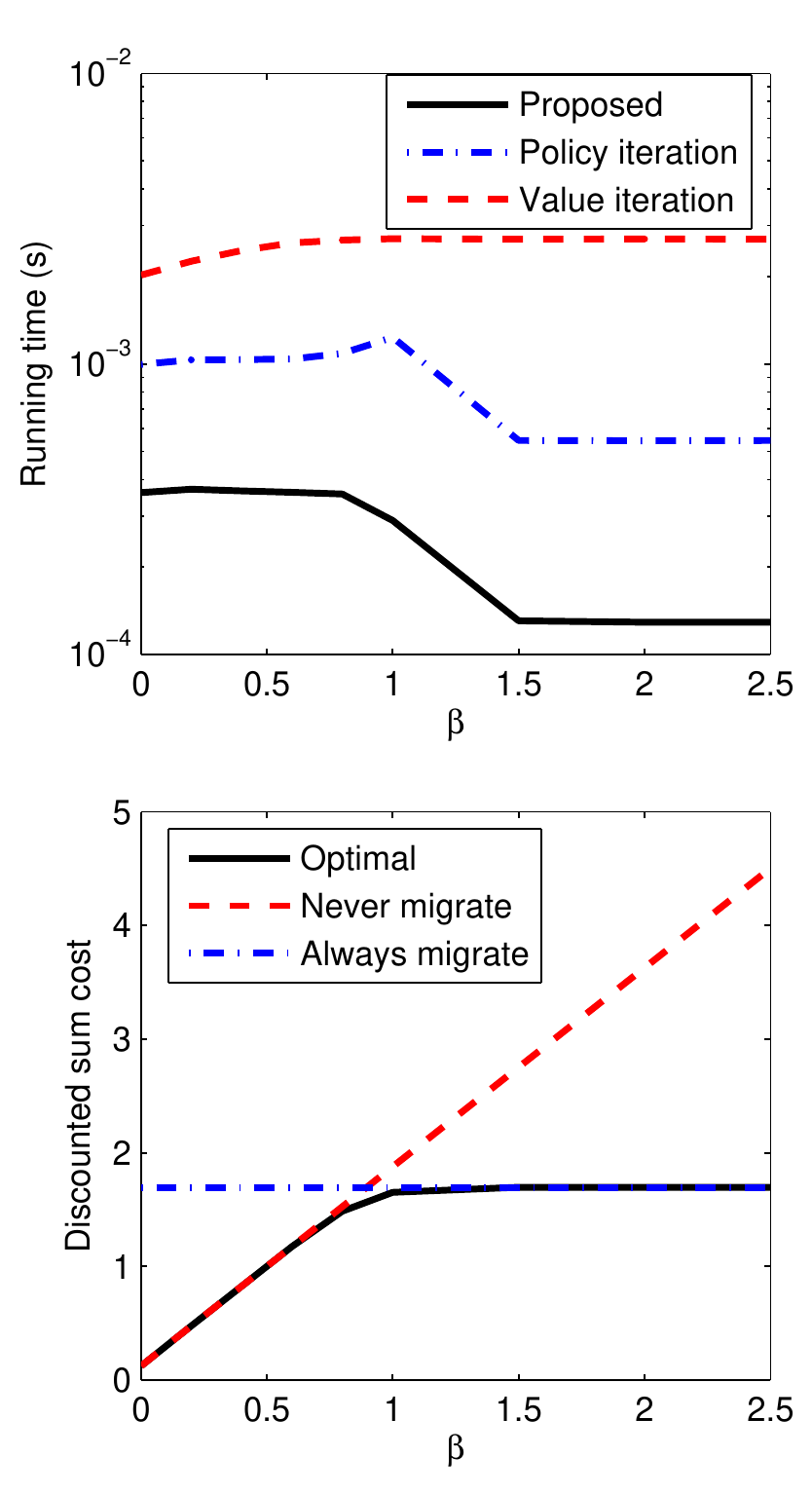}} \subfigure[]{\includegraphics[width=0.28\linewidth]{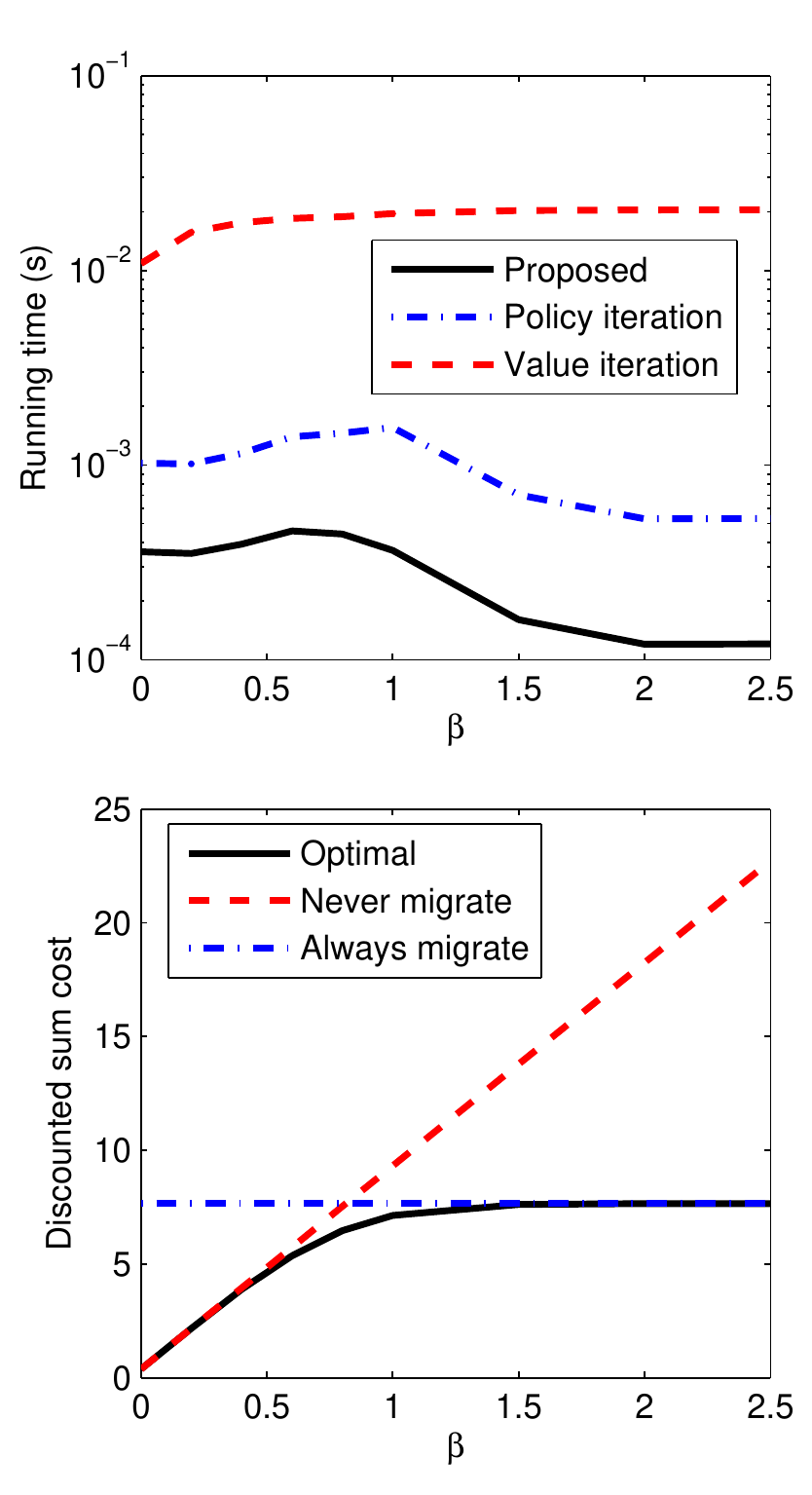}} \subfigure[]{\includegraphics[width=0.28\linewidth]{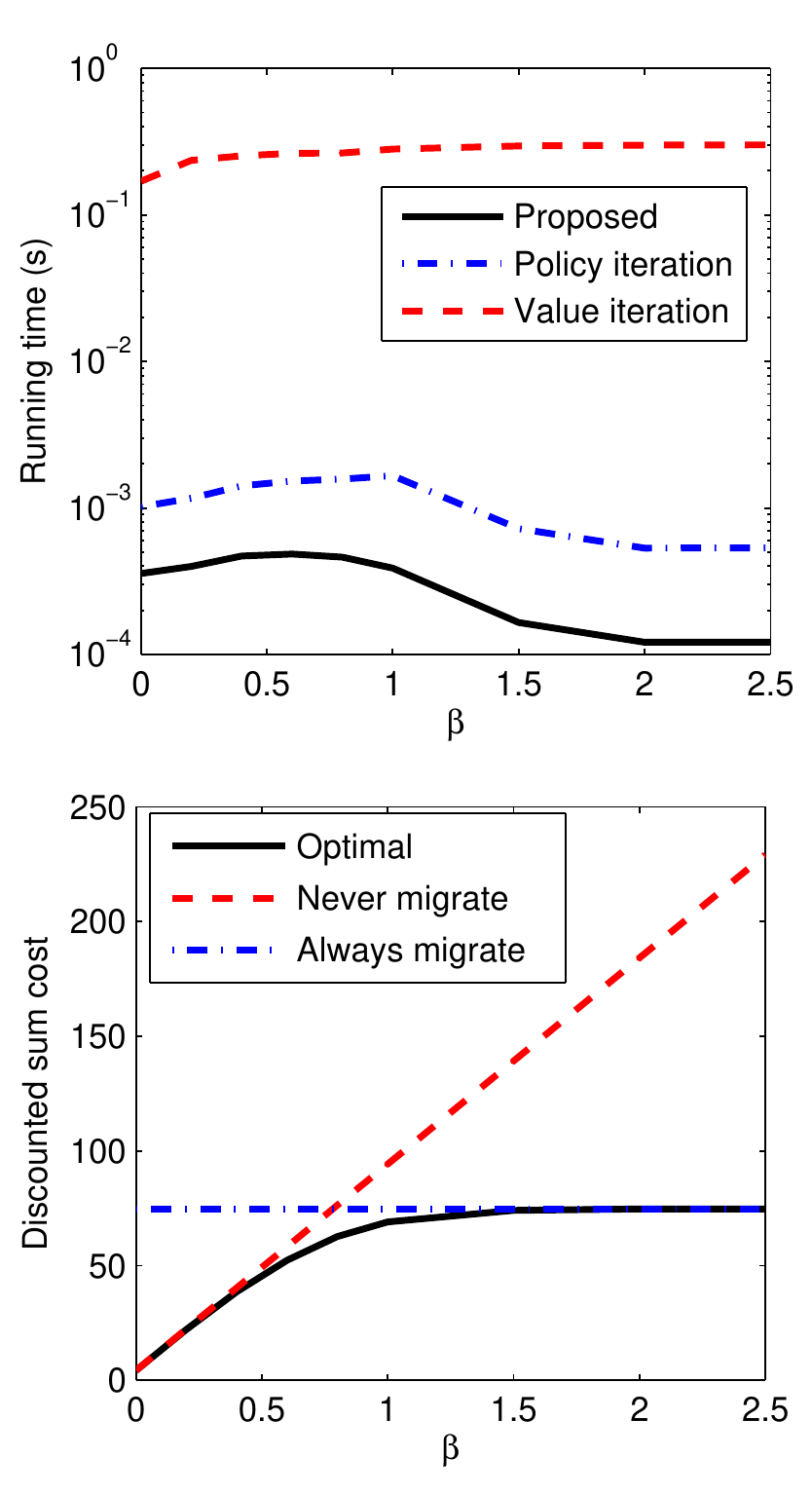}}}

\protect\caption{Performance under different $\beta$: (a) $\gamma=0.5$, (b) $\gamma=0.9$,
(c) $\gamma=0.99$.}
\label{fig:simBeta} \vspace{-0.2in}
\end{figure*}

\section{Simulation Results}

We compare the proposed threshold method with the standard value iteration
and policy iteration methods \cite[Ch. 3]{powell2007approximate}.
Simulations are run on MATLAB, on a computer with 64-bit Windows 7,
Intel Core i7-2600 CPU, and 8GB memory. The value iteration terminates
according to an error bound of $\epsilon=0.1$ in the discounted sum
cost. Note that the proposed method and the standard policy iteration
method always produce the optimal cost. The number of states $|M|=N=10$.
The transition probabilities $p$ and $q$ are randomly generated.
Simulations are run with $1000$ different random seeds in each setting
to obtain the average performance. The running time and the discounted
sum costs under different values of $\beta$ are shown in Fig. \ref{fig:simBeta}.

The results show that the proposed method always has lowest running
time, and the running time of the standard policy iteration method
is $2$ to $5$ times larger than that of the proposed algorithm,
while the value iteration approach consumes longer time. This is because
the proposed algorithm simplifies the solution search procedure compared
with standard mechanisms. The results also show that the proposed
method can provide the optimal cost compared with a never-migrate
(except for states $M$ and $N$) or always-migrate policy. It is
also interesting to observe that the optimal cost approaches the cost
of a never-migrate policy when $\beta$ is small, and it approaches
the cost of an always-migrate policy when $\beta$ is large. Such
a result is intuitive, because a small $\beta$ implies a small data
transmission cost, and when $\beta$ is small enough, then it is not
really necessary to migrate; when $\beta$ is large, the data transmission
cost is so large so that it is always good to migrate to avoid data
communication via the backhaul network.

\section{Conclusion and Future Work}

In this paper, we have proposed a threshold policy-based mechanism
for service migration in mobile micro-clouds. We have shown the existence
of optimal threshold policy and proposed an algorithm for finding
the optimal thresholds. The proposed algorithm has polynomial time-complexity
which is independent of the discount factor $\gamma$. This is promising
because the time-complexity of standard algorithms for solving MDPs,
such as value iteration or policy iteration, are generally dependent
on the discount factor, and they can only be shown to have polynomial
time-complexity when the discount factor is regarded as a constant%
\footnote{This is unless the MDP is deterministic, which is not the case in
this paper.%
} \cite{post2013simplex}. Although the analysis in this paper is based
on 1-D random walk of mobile users, it can serve as a theoretical
basis for more complicated scenarios, such as 2-D user-mobility, in
the future.

\section*{Acknowledgment}

This research was partly sponsored by the U.S. Army Research Laboratory
and the U.K. Ministry of Defence and was accomplished under Agreement
Number W911NF-06-3-0001. The views and conclusions contained in this
document are those of the author(s) and should not be interpreted
as representing the official policies, either expressed or implied,
of the U.S. Army Research Laboratory, the U.S. Government, the U.K.
Ministry of Defence or the U.K. Government. The U.S. and U.K. Governments
are authorized to reproduce and distribute reprints for Government
purposes notwithstanding any copyright notation hereon.

\bibliographystyle{IEEEtran}
\bibliography{IEEEabrv,MappingProblemWMobility}

\end{document}